\newcommand{\eqdef}{:=}
\newcommand{\rvec}[1]{\mathbbm{#1}} 		
\newcommand{\rmat}[1]{\mathbbm{#1}} 	
\newcommand{\E}{\mathsf{E}}		
\newcommand{\Var}{\mathsf{Var}}			
\newcommand{\stdset}[1]{\mathbbmss{#1}}	
\newcommand{\set}[1]{\mathcal{#1}}		
\renewcommand{\vec}[1]{\bm{#1}}		
\newcommand{\CN}{\mathcal{CN}}			
\newcommand{\herm}{\mathsf{H}}			
\newcommand{\T}{\mathsf{T}}				
\newcommand{\real}{\stdset{R}}			
\newcommand{\Natural}{\stdset{N}}			
\newcommand{\signal}[1]{\vec{#1}}     
\newcommand{\refeq}[1]{\eqref{#1}}
\newtheorem{proposition}{Proposition}
\newtheorem{assumption}{Assumption}
\newtheorem{remark}{Remark}
\newtheorem{corollary}{Corollary} 
\newtheorem{example}{Example}
\begin{document}

\title{Closed-form max-min power control for some cellular and cell-free massive MIMO networks
\thanks{L. Miretti, R. L. G. Cavalcante, and S. Sta\'nczak acknowledge the financial support by the Federal Ministry of Education and Research of Germany in the programme of “Souverän. Digital. Vernetzt.” Joint project 6G-RIC, project identification number: 16KISK020K and 16KISK030.}
}

\author{
\IEEEauthorblockN{Lorenzo Miretti, Renato L. G. Cavalcante, Sławomir Sta\'nczak}
\IEEEauthorblockA{\textit{Fraunhofer Institute for Telecommunications Heinrich-Hertz-Institut} \\
\textit{Technical University of Berlin}\\
Berlin, Germany \\
\{lorenzo.miretti,renato.cavalcante,sławomir.stanczak\}@hhi.fraunhofer.de}
\and
~
\IEEEauthorblockN{Martin Schubert, Ronald B\"ohnke, Wen Xu}
\IEEEauthorblockA{\textit{Huawei}\\
Munich, Germany \\
\{martin.schubert,ronald.boehnke,wen.xu\}@huawei.de}
}

\IEEEspecialpapernotice{(Invited Paper)} 
\maketitle

\begin{abstract} Many common instances of power control problems for cellular and cell-free massive MIMO networks can be interpreted as max-min utility optimization problems involving affine interference mappings and polyhedral constraints. We show that these problems admit a closed-form solution which depends on the spectral radius of known matrices. In contrast, previous solutions in the literature have been indirectly obtained using iterative algorithms based on the bisection method, or on fixed-point iterations. Furthermore, we also show an asymptotically tight bound for the optimal utility, which in turn provides a simple rule of thumb for evaluating whether the network is operating in the noise or interference limited regime. We finally illustrate our results by focusing on classical max-min fair power control for cell-free massive MIMO networks.
\end{abstract}

\begin{IEEEkeywords}
power control, massive MIMO, cell-free, max-min fairness, interference calculus, 6G
\end{IEEEkeywords}

\section{Introduction} 
Weighted max-min utility optimization problems are common in power control tasks in wireless networks because the solutions are known to promote fair allocation of resources among users, and to span the boundary of the achievable performance region \cite{slawomir09,emil2013resource,marzetta16,ngo2017cell,demir2021,donato2021,buzzi2020,chaves2020convergence,yang2017massive,massivemimobook}.
This is not true, for instance, for weighted sum utility maximization problems: except for the particular case of convex performance regions, their solutions may miss some boundary points that are important under a user fairness perspective \cite{das1997closer}, \cite[Remark 1.3]{emil2013resource}. 


Many existing globally optimal solvers are based on a bisection method that addresses a sequence of convex feasibility problems at each iteration \cite{donato2021,ngo2017cell}\cite[Algorithm~1]{massivemimobook}. One of the main limitations of solutions of this type is that the feasibility problems need to be solved with high numerical precision \cite[Sect.~3.4.1]{demir2021}. Therefore, depending on the numerical technique being applied, the overarching bisection algorithm can become too slow in large systems. As a means of addressing this potential drawback,  the algorithm in \cite[Algorithm~3.1]{demir2021} 
considers a sequence of standard optimization problem for which fast algorithms are readily available. Nevertheless, the resulting scheme can still be too complex for real time implementation. To address this issue, recently, some authors even started to consider suboptimal solutions based on meta-heuristics \cite{conceicao2022heuristics}.

Promising optimal alternatives to the bisection method are the iterative power control algorithms in \cite[Algorithm~3.2]{demir2021}, \cite{schubert2019,renatomaxmin,renato2016maxmin,ismayilov2019power}, which can be seen as applications of the mathematical tools reviewed in \cite{nuzman07,renato2019} and the references therein. However, many of these algorithms are tailored to challenging problems going beyond simple power control, 
and they do not fully exploit the affine structure of some power control problems of practical interest. Therefore, simple expressions for the optimal solution have not been obtained. 

More specifically, in some max-min utility optimization problems, which cover the power control problems in cellular and cell-free networks described in \cite[Ch.~5.3.2]{marzetta16}, \cite{ngo2017cell} \cite[Ch.~7]{demir2021} \cite[Theorem~7.1]{massivemimobook},\cite{chaves2020convergence,yang2017massive} as particular instances, the optimal power allocation can be seen as the fixed point of an affine mapping, and the power constraint is a polyhedral set. 
By exploiting this structure and the results in \cite{renato2019}, Section~\ref{sec:power_control} presents a simple expression for a solution to this class of problems. In particular, we show that the limit point (i.e., the optimal power allocation) sought by some existing iterative power control algorithms can be straightforwardly obtained by computing the spectral radius of known matrices and by solving only one system of linear equations that, unlike the approaches in  \cite{chaves2020convergence,yang2017massive}, do not require the bisection method for its construction. 
We also specialize the bounds in \cite{renato2019} to gain insights into the expression of the optimal utility. These bounds are useful to identify whether a  network is operating close to an interference limited or noise limited regime. 

Section~\ref{sec:cellfree} illustrates the practical application of the above results to cell-free massive multiple-input multiple-output (MIMO) networks \cite{demir2021}, by revisiting common instances of uplink (UL) and downlink (DL) power control problems. The main feature of these problems is that their solutions depend only on long-term channel statistics. As a result, we open up the possibility of performing real-time, large-scale, and globally optimal power control, e.g., at a remote central controller, without requiring excessive fronthaul and computational resources.

\subsection{Related results}
Although somewhat unkown within the cellular and cell-free massive MIMO literature, our solution to the max-min utility optimization problem in Section~\ref{sec:power_control} has been rediscovered many times in different contexts. For instance, under slightly more restrictive conditions, a related solution  can be found in \cite[Algorithm~5.3]{slawomir09}. Furthermore, the same solution can also be obtained from \cite[Appendix~A]{cai2012maxmin}, where it was used as a basis for addressing a more general problem. In this work we provide an alternative proof and show that, in fact, the solution follows from a simple application of \cite{nuzman07}. Moreover, from a more practical perspective, we show how this solution can be directly applied to popular power control problems in the modern cellular and cell-free massive MIMO frameworks.

\subsection{Notation and mathematical preliminaries}
We denote by $\real_+$ and $\real_{++}$ the sets of, respectively, nonnegative and positive reals. The spectral radius of a matrix $\signal{M}\in\real^{K\times K}$ is denoted by $\rho(\signal{M})$.  A norm $\|\cdot\|$ on $\real^K$ is said to be \emph{monotone} (on the nonnegative orthant) if $(\forall\signal{x}\in\real_+^K)(\forall\signal{y}\in\real_+^K)~\signal{x}\le\signal{y}\Rightarrow \|\signal{x}\|\le\|\signal{y}\|$, where inequalities involving vectors should be understood coordinate-wise. A mapping $T:\real_{+}^K\to\real^K_{++}$ is called a standard interference mapping if the following properties hold \cite{yates95}: 
\begin{itemize}
	\item[(i)] [monotonicity] $(\forall \signal{x}\in\real^K_{+})(\forall \signal{y}\in\real^K_{+}) ~ \signal{x}\ge\signal{y} \Rightarrow T(\signal{x})\ge T(\signal{y})$
	\item[(ii)] [scalability] $(\forall \signal{x}\in\real^K_+)$ $(\forall \alpha>1)$  $\alpha {T}(\signal{x})>T(\alpha\signal{x})$.
\end{itemize}
For later reference, given a standard interference mapping, we call each coordinate function of the mapping a \emph{standard interference function}. For simplicity, in the remainder of this study we also require continuity for a mapping to be called a standard interference function. A mapping $T:\real_+^K\to\real_{++}^K$ is said to be a \emph{positive concave mapping} if each coordinate function is positive and concave, and we recall that positivity and concavity imply properties (i) and (ii) above  \cite[Proposition~1]{renato2016}. The set of fixed points of $T:\real_+^K\to\real_+^K$ is denoted by $\mathrm{Fix}(T):=\{\signal{x}\in\real_+^K~|~T(\signal{x})=\signal{x}\}$, and we note that standard interference mappings have at most one fixed point \cite{yates95}. 
\begin{proposition}{\cite[Theorem~3.2]{nuzman07}} Given a standard interference mapping $T:\real_+^K\to\real_{++}^K$ and a monotone norm $\|\cdot\|$, the following \emph{conditional eigenvalue} problem is guaranteed to have a unique solution:
\begin{equation}\label{eq:P0}
	\begin{array}{rl}
		\mathrm{find} & (\lambda,\signal{x})\in \real_{++}\times \real_{++}^K \\ 
		\mathrm{such~that}& T(\signal{x})=\lambda \signal{x}, \quad \|\signal{x}\|=1. \\
	\end{array}
\end{equation}
\end{proposition}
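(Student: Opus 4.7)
The plan is to establish existence and uniqueness as two separate steps, using Brouwer's fixed-point theorem for existence and a ratio-based contradiction exploiting the strict scalability of $T$ for uniqueness.

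For existence, consider the set $\mathcal{S}\eqdef\{\signal{x}\in\real_+^K:\|\signal{x}\|=1\}$, which is compact and nonempty. Because $T$ takes values in $\real_{++}^K$ and $\|\cdot\|$ is monotone, one has $\|T(\signal{x})\|>0$ for every $\signal{x}\in\mathcal{S}$, so the renormalization $f(\signal{x})\eqdef T(\signal{x})/\|T(\signal{x})\|$ defines a continuous self-map of $\mathcal{S}$ whose image lies in the interior $\mathcal{S}\cap\real_{++}^K$. Although $\mathcal{S}$ is not convex in general, the radial map $\signal{y}\mapsto \signal{y}/\|\signal{y}\|$ is a homeomorphism between $\mathcal{S}$ and the standard simplex, so pulling $f$ back through it yields a continuous self-map of a compact convex set, to which Brouwer applies. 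The resulting fixed point $\signal{x}^\star\in\mathcal{S}$ satisfies $T(\signal{x}^\star)=\lambda^\star \signal{x}^\star$ with $\lambda^\star\eqdef\|T(\signal{x}^\star)\|>0$, providing a valid solution.

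For uniqueness, suppose that $(\lambda_1,\signal{x}_1)$ and $(\lambda_2,\signal{x}_2)$ both solve the problem, and set $\alpha\eqdef\max_i(x_{2,i}/x_{1,i})$ and $\beta\eqdef\max_i(x_{1,i}/x_{2,i})$; these are well defined because both eigenvectors lie in $\real_{++}^K$. If $\alpha>1$, then evaluating the chain $\alpha \lambda_1\signal{x}_1=\alpha T(\signal{x}_1)>T(\alpha\signal{x}_1)\ge T(\signal{x}_2)=\lambda_2\signal{x}_2$ at an index where $\alpha x_{1,i}=x_{2,i}$ forces $\lambda_1>\lambda_2$; symmetrically, $\beta>1$ forces $\lambda_2>\lambda_1$. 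These cannot hold simultaneously, so at most one of $\alpha,\beta$ exceeds $1$. If both are $\le 1$, then $\signal{x}_2\le\signal{x}_1$ and $\signal{x}_1\le\signal{x}_2$, so $\signal{x}_1=\signal{x}_2$ and the eigenvector equation gives $\lambda_1=\lambda_2$. The only remaining possibility, say $\alpha>1$ and $\beta\le 1$, I would rule out as follows: $\beta\le 1$ gives $\signal{x}_1\le\signal{x}_2$, so by monotonicity $\lambda_1\signal{x}_1=T(\signal{x}_1)\le T(\signal{x}_2)=\lambda_2\signal{x}_2$, whence $\signal{x}_1\le(\lambda_2/\lambda_1)\signal{x}_2$ with $\lambda_2/\lambda_1<1$; applying monotonicity and homogeneity of $\|\cdot\|$ then yields $1=\|\signal{x}_1\|\le (\lambda_2/\lambda_1)\|\signal{x}_2\|<1$, a contradiction.

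The main obstacle I anticipate is precisely this last step. A general monotone norm (for example $\|\cdot\|_\infty$) need not be \emph{strictly} monotone, so $\signal{x}_1\le\signal{x}_2$ together with $\|\signal{x}_1\|=\|\signal{x}_2\|$ does not by itself force $\signal{x}_1=\signal{x}_2$; one must combine the ordinary monotonicity of the norm with the strict inequality coming from the scalability axiom of $T$, which is where the necessary separation actually originates. A more conceptual rephrasing of the same argument would invoke strict contractivity of $T$ in the Hilbert projective metric on $\real_{++}^K$, but the elementary ratio approach sketched above appears sufficient and more self-contained.
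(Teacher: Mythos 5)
Your proof is correct, and there is in fact no internal proof to compare it against: the paper imports this proposition wholesale from \cite[Theorem~3.2]{nuzman07} without proof, so your argument is a genuinely self-contained alternative. Both halves check out. For existence, the radial map does conjugate $\signal{x}\mapsto T(\signal{x})/\|T(\signal{x})\|$ to a continuous self-map of the simplex, and the resulting fixed point lies automatically in $\real_{++}^K$ because $T$ maps into the open orthant, so $(\lambda^\star,\signal{x}^\star)$ with $\lambda^\star=\|T(\signal{x}^\star)\|>0$ is a valid solution. For uniqueness, your max-ratio case analysis is complete, and the step you flag as the ``main obstacle'' is already sound as written: the needed strictness enters through scalability, which in the mixed case $\alpha>1$, $\beta\le 1$ forces $\lambda_2<\lambda_1$, after which plain monotonicity and absolute homogeneity of the norm yield the contradiction $1=\|\signal{x}_1\|\le(\lambda_2/\lambda_1)\|\signal{x}_2\|<1$; no strict monotonicity of the norm is required. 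One dependency you should state explicitly: Brouwer needs $f$ continuous on all of $\mathcal{S}$, including points on the boundary of the nonnegative orthant, and monotonicity plus scalability guarantee continuity only on $\real_{++}^K$ — so your existence step leans on the continuity clause that this paper deliberately adds to the definition of a standard interference mapping (Yates's original definition omits it), whereas your uniqueness half uses no continuity at all. Compared with the contraction-style argument underlying \cite{nuzman07} (essentially the projective-metric viewpoint you mention in closing), your route is more elementary but purely existential: the contraction approach additionally delivers convergence of the normalized fixed-point iteration to the solution, which is precisely what makes the result algorithmically useful in the power-control literature this paper builds on.
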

We finish this section with a known result that is used to keep the proof of our main contributions self-contained.


\begin{proposition}
	\label{fact.spec_bound} Let $\signal{M}\in\real^{K\times K}_+$ and $(\lambda,\signal{x})\in\real_{++}\times\real_{++}^K$ satisfy the inequality $\signal{Mx}\le\lambda\signal{x}$. Then the spectral radius $\rho(\signal{M})$ of $\signal{M}$ is upper bounded by $\lambda$, i.e., $\rho(\signal{M})\le\lambda$.
\end{proposition}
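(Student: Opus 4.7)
The plan is to exploit the strict positivity of the vector $\signal{x}$ to convert the inequality $\signal{Mx}\le\lambda\signal{x}$ into a bound on a matrix norm of a matrix similar to $\signal{M}$, and then invoke the elementary fact that the spectral radius is dominated by any induced matrix norm.

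Concretely, since $\signal{x}\in\real_{++}^K$, the diagonal matrix $\signal{D}:=\mathrm{diag}(\signal{x})$ is invertible with nonnegative inverse. First I would form the similar matrix $\widetilde{\signal{M}}:=\signal{D}^{-1}\signal{M}\signal{D}$, which is again nonnegative and satisfies $\rho(\widetilde{\signal{M}})=\rho(\signal{M})$. Then I would apply $\signal{D}^{-1}$ to both sides of $\signal{Mx}\le\lambda\signal{x}$ to obtain $\widetilde{\signal{M}}\signal{1}=\signal{D}^{-1}\signal{M}\signal{x}\le\lambda\signal{1}$, where $\signal{1}$ denotes the all-ones vector. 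Reading this inequality coordinate-wise shows that every row sum of $\widetilde{\signal{M}}$ is bounded by $\lambda$, i.e.\ $\|\widetilde{\signal{M}}\|_\infty\le\lambda$.

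To conclude, I would combine the similarity invariance of the spectral radius with the standard inequality $\rho(\signal{A})\le\|\signal{A}\|$ valid for any induced matrix norm, yielding
\begin{equation*}
\rho(\signal{M})=\rho(\widetilde{\signal{M}})\le\|\widetilde{\signal{M}}\|_\infty\le\lambda.
\end{equation*}

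I do not foresee any serious obstacle: the only delicate point is ensuring that $\signal{D}^{-1}$ is well defined and preserves the direction of the inequality, which is guaranteed by $\signal{x}\in\real_{++}^K$. An alternative route would be to iterate the inequality to obtain $\signal{M}^n\signal{x}\le\lambda^n\signal{x}$ and then apply Gelfand's formula to conclude $\rho(\signal{M})\le\lambda$, but the similarity argument above is more direct and keeps the proof self-contained.
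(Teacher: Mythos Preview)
Your argument is correct and is more elementary than the paper's own proof. The paper proceeds by first invoking the classical Perron--Frobenius theorem to obtain a nonnegative eigenvector $\signal{v}$ associated with the eigenvalue $\rho(\signal{M})$, and then appeals to a monotonicity lemma for cone mappings (\cite[Lemma~3.3]{nussbaum1986convexity}) to conclude from $\signal{Mx}\le\lambda\signal{x}$ with $\signal{x}\in\real_{++}^K$ that $\rho(\signal{M})\le\lambda$. Your route via the diagonal similarity $\widetilde{\signal{M}}=\signal{D}^{-1}\signal{M}\signal{D}$ and the bound $\rho(\widetilde{\signal{M}})\le\|\widetilde{\signal{M}}\|_\infty$ avoids both external references and keeps everything at the level of induced matrix norms; the price is that one must accept the standard fact $\rho(\signal{A})\le\|\signal{A}\|$, which is itself easy to verify. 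The paper's approach, by contrast, fits naturally into the interference-mapping framework used elsewhere in the article, but yours is arguably the more direct and self-contained proof of this particular proposition.
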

\begin{proof}
	From the classical Perron-Frobenius theorem \cite[Theorem~2.4.1(i)]{krause2015},  we known that $\signal{M}$ has a maximal nonnegative eigenvalue $\rho(\signal{M})$, and there exists $\signal{v}\in\real^K_+\backslash\{\vec{0}\}$ such that $\signal{Mv}=\rho(\signal{M})\signal{v}$. Now, consider the following linear mapping defined on the nonnegative cone $\real_+^K$:
	\begin{align*} T:\real_+^K\to\real_+^K:\signal{u}\mapsto \signal{Mu}.
		\end{align*}
	Since $T(\signal{x})=\signal{Mx}\le\lambda\signal{x}$ and $\signal{x}\in\real_{++}^K$ by assumption, it follows from \cite[Lemma~3.3]{nussbaum1986convexity} that  $T(\signal{v})=\signal{Mv}=\rho(\signal{M})\signal{v}$ implies $\rho(\signal{M})\le\lambda$, and the proof is complete.
\end{proof}

\section{Max-min power control with polyhedral power constraints}\label{sec:power_control}
\subsection{Problem statement}
As discussed in \cite[Ch.~3, Ch. 7]{demir2021}, many existing uplink and downlink power control algorithms for cellular and cell-free massive MIMO networks solve particular instances of the following optimization problem, which can be seen as a generalization of the power control problems in \cite[Ch.~5.6.3]{slawomir09} \cite[Ch.~5.3.2]{marzetta16} \cite[Theorem~7.1]{massivemimobook}\cite{chaves2020convergence,yang2017massive}:
\begin{equation}\label{eq:P1}
	\begin{array}{rl}
		\underset{\vec{p}=[p_1,\ldots,p_K]^\T\in\stdset{R}^K}{\mathrm{maximize}}&\min_{k\in\{1,\ldots,K\}} \dfrac{b_k~p_k}{\signal{c}_k^\T\signal{p}+\sigma_k} \\ 
		\mathrm{subject~~to}& (\forall~n\in\{1,\ldots,N\})~\signal{a}_n^\T\signal{p}\le p_\mathrm{max}\\
		&\signal{p}\geq \signal{0},
	\end{array}
\end{equation}
where $K\in\Natural$, $N\in\Natural$, $p_\mathrm{max} \in \real_{++}$,
\begin{align*}
[\vec{a}_1,\ldots,\vec{a}_N] &=: \vec{A}\in \real_{+}^{K\times N}, \\
(b_1,\ldots,b_K) &=: \vec{b} \in \real_{++}^{K},\\
[\vec{c}_1,\ldots,\vec{c}_K] &=: \vec{C}\in \real_{+}^{K\times K}, \text{ and} \\
(\sigma_1,\ldots, \sigma_K)&=: \vec{\sigma}\in\real_{++}^K
\end{align*}
are problem parameters. For example, in uplink power control problems, $K$ is the number of users in the network, $\vec{A}$ collects $N$ linear constraints on the power vector $\signal{p}$ (the optimization variable), $p_{\max}$ is the maximum allowed transmit power, $\vec{\sigma}$ is the vector of noise powers, and  $\vec{b}$ and $\vec{C}$ are parameters used to model the effective channel for some \textit{fixed} network configuration (e.g., the choice of beamformers). More detailed examples related to cell-free massive MIMO are provided in Section~\ref{sec:cellfree}; we simply anticipate that the cost function in Problem~\eqref{eq:P1} can be interpreted as the lowest signal-to-interference-plus-noise ratio (SINR) among all users in the network.  We emphasize that, in the above formulation, the parameters $\vec{A}$, $\vec{b}$, $\vec{C}$, and $\vec{\sigma}$ must not depend on $\vec{p}$. 


\subsection{Optimal solution}
The main contribution of this section is to show that a solution to Problem~\eqref{eq:P1} can be obtained by solving a single system of linear equations that can be easily constructed with the problem parameters. The resulting system does not require, for example, the bisection method for its construction, as done in \cite{chaves2020convergence,yang2017massive} to solve a particular instance of Problem~\eqref{eq:P1}. 

To derive this result, which is shown later in Proposition~\ref{proposition.main}, we first need to introduce a (mild) technical assumption. Denote by $S\subset\real_+^K$ the set of power vectors satisfying all constraints in Problem~\eqref{eq:P1}, which we call the \emph{feasible set}. Hereafter, we impose the following natural assumption on $S$, which implicitly restricts the choice of the matrix $\vec{A}$.
\begin{assumption}
	\label{assumption.power}
	The set of feasible power allocations $S$ is bounded, and $S\cap\real_{++}^K\neq\emptyset$.
\end{assumption}

Boundedness of the set $S$ is expected because in any power control problem the transmit power is limited by law or by hardware capabilities. In some power control problems, existence of $\signal{p}\in\real_{++}^K\cap S$  can be intuitively understood as allowing every user to be served by the network. With the above assumption, the set $S$ has the following additional properties:
\begin{remark}
	\label{remark.properties}
	The feasible set $S$ is a compact convex set with nonempty interior. Furthermore, it is downward comprehensive on the nonnegative cone $\real_{+}^K$, i.e., $(\forall \signal{q}\in\real_+^K) (\forall\signal{p}\in S)~\signal{q}\le \signal{p}\Rightarrow \signal{q}\in S$.
\end{remark}
\begin{proof}
	The set $S$ is closed and convex because it is the intersection of the closed convex sets, or, more precisely, the cone $\real_{+}^K$ and the closed half-spaces $(\forall n\in\{1,\ldots,N\}) ~H_n:=\{\signal{p}\in\real^K~|~\signal{a}_n^\T\signal{p}\le p_\text{max}\}$. Being bounded by assumption (see Assumption~\ref{assumption.power}) and closed, $S$ is compact because the space $\real^K$ is finite dimensional. The fact that $S$ is downward comprehensible on $\real_{+}^K$ is immediate from nonnegativity of the vectors $(\signal{a}_n)_{n\in\{1,\ldots,N\}}$ and the definition of the half-spaces $(H_n)_{n\in\{1,\ldots,N\}}$. Downward comprehensibility of $S$ and the existence of a vector $\signal{p}\in\real_{++}^K\cap S$ imply that $S$ has nonempty interior because, for any norm $\|\cdot\|$ on $\real^K$ and for $\epsilon>0$ sufficiently small,\linebreak[4] $ \emptyset\neq B:=\{\signal{q}\in\real^K~|~ \|(1/2)\signal{p} - \signal{q}\|<\epsilon\}\subset S.$ 
\end{proof}

\begin{remark}
	\label{remark.norm}
Remark~\ref{remark.properties} and  \cite[Proposition~2]{renatomaxmin} imply that there exists a monotone norm $\|\cdot\|_\star$ on $\stdset{R}^K$ such that the feasible set $S$ can be equivalently written as $S=\{\signal{p}\in \real_+^K ~ | ~ \|\signal{p}\|_\star \le 1\}$. For the particular setting in Problem~\eqref{eq:P1}, this monotone norm $\|\cdot\|_\star$  takes the form
\begin{align}
	\label{eq.norm}
	(\forall \signal{p}\in\real^K)~\|\signal{p}\|_\star := \dfrac{1}{p_{\max}}  \max_{n\in\{1,\ldots,N\}} \signal{a}_n^\T |\signal{p}|,
\end{align}
where $|\signal{p}|$ is the column vector obtained by taking the coordinate-wise absolute value of a vector $\signal{p}$, i.e.,  $(\forall\signal{p}=[p_1,\ldots,p_K]\in\real_+^K)~ |\signal{p}|:=[|p_1|,\dots,|p_K|]^\T$.
\end{remark}
Below we further simplify $\|\cdot\|_\star$ in Remark~\ref{remark.norm} for two common choices of $\vec{A}$: 
\begin{example}
	If $K=N$ and $\vec{A} = \vec{I}_K$ is the $K$-dimensional identity matrix, then $\|\cdot\|_\star$ reduces to a scaled  $l_\infty$-norm: 
\begin{equation*}
		\|\signal{p}\|_\star=\dfrac{1}{p_\mathrm{max}} \max_{k\in\{1,\ldots,K\}} |p_k| = \dfrac{1}{p_\mathrm{max}}\|\vec{p}\|_\infty.
\end{equation*}
\end{example}

\begin{example}
	If $N=1$ and $\vec{A} = \vec{1}_{K\times 1}$ is the $K$-dimensional vector of ones, then $\|\cdot\|_\star$ reduces to a scaled $l_1$-norm: 
	\begin{align*}
		\|\signal{p}\|_\star = \dfrac{1}{p_\mathrm{max}} \sum_{k\in\{1,\ldots,K\}} |p_k| =\dfrac{1}{p_\mathrm{max}}\|\vec{p}\|_1.
	\end{align*}
\end{example}

Now, consider the following optimization problem:
\begin{equation}\label{eq:P2}
	\begin{array}{rl}
		\underset{(t,\vec{p})\in\real_+\times\real^K_{++}}{\mathrm{maximize}} &t \\ 
		\mathrm{subject~~to}& \signal{p}=t(\signal{Mp}+\signal{u}), \quad \|\signal{p}\|_\star \le 1.
	\end{array}
\end{equation}
where $\|\cdot\|_\star$ is the monotone norm in \eqref{eq.norm} induced by the power constraints $(\vec{A},p_{\max})$ satisfying Assumption \ref{assumption.power}, and where we define the scaled coefficients
\begin{align*}
\vec{M} &:= \mathrm{diag}(\vec{b})^{-1}\vec{C}^\T,\\
\vec{u} &:= \mathrm{diag}(\vec{b})^{-1}\vec{\sigma}.
\end{align*}
Problems~\eqref{eq:P1} and \eqref{eq:P2} are related in the next proposition, which can be proved using standard arguments in the literature (see, for example, the discussion below \cite[Eq.~(3.34)]{demir2021} and the references therein), so we omit the proof for brevity.   
\begin{proposition}
	\label{proposition.epigraph} If the tuple $(t^\star,\signal{p}^\star)\in\real_{++}\times \real_{++}^K$ solves Problem~\eqref{eq:P2}, then $\signal{p}^\star=:[p_1^\star,\ldots,p_K^\star]^\T$ is a solution to Problem~\eqref{eq:P1}, and the optimal value of \eqref{eq:P1} is \begin{equation}
	\min_{k\in\{1,\ldots,K\}} \dfrac{b_k~p_k^\star}{\signal{c}_k^\T\signal{p}^\star+\sigma_k}=t^\star.
	\end{equation}
\end{proposition}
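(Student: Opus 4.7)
The plan is to separate the statement into two sub-claims: first, that $\signal{p}^\star$ is feasible for Problem~\eqref{eq:P1} and yields objective value $t^\star$; and second, that no feasible point of \eqref{eq:P1} can strictly exceed $t^\star$. For the first sub-claim, I would invoke Remark~\ref{remark.norm} to conclude from $\|\signal{p}^\star\|_\star\le 1$ and $\signal{p}^\star\in\real_{++}^K$ that $\signal{p}^\star\in S$. Then, using $\signal{M}=\mathrm{diag}(\vec{b})^{-1}\signal{C}^\T$ and $\signal{u}=\mathrm{diag}(\vec{b})^{-1}\vec{\sigma}$, reading the equation $\signal{p}^\star=t^\star(\signal{Mp}^\star+\signal{u})$ coordinate by coordinate gives $b_k p_k^\star = t^\star(\signal{c}_k^\T\signal{p}^\star+\sigma_k)$ for every $k$, so every per-user ratio equals $t^\star$ exactly, and hence the minimum also equals $t^\star$.

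For the second sub-claim I would argue by contradiction. Suppose some $\signal{p}'\in S$ attains a minimum SINR $t' > t^\star$. Since $t^\star>0$ forces $t'>0$, and since $\sigma_k>0$ makes every coordinate of any vector with positive SINRs strictly positive, we have $\signal{p}'\in\real_{++}^K$. The per-user inequalities $b_k p_k' \ge t'(\signal{c}_k^\T\signal{p}'+\sigma_k)$ rearrange into the vector inequality $\signal{p}' \ge T_{t'}(\signal{p}')$, where $T_{t'}:\real_+^K\to\real_{++}^K$ is the affine monotone mapping $\signal{x}\mapsto t'(\signal{Mx}+\signal{u})$. The goal is to produce from $\signal{p}'$ a point $\signal{p}''\in\real_{++}^K\cap S$ that satisfies the equality $\signal{p}''=T_{t'}(\signal{p}'')$, since then $(t',\signal{p}'')$ would be feasible for \eqref{eq:P2} with value $t'>t^\star$, contradicting optimality of $(t^\star,\signal{p}^\star)$.

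The main obstacle is precisely this passage from the inequality $T_{t'}(\signal{p}')\le \signal{p}'$ to an equality. I would use the classical monotone-iteration device: set $\signal{p}^{(0)}:=\signal{p}'$ and $\signal{p}^{(n+1)}:=T_{t'}(\signal{p}^{(n)})$. Monotonicity of $T_{t'}$ (inherited from $\signal{M}\in\real_+^{K\times K}$) and a one-line induction show that the sequence is coordinate-wise nonincreasing, while the explicit form of $T_{t'}$ gives the lower bound $\signal{p}^{(n)}\ge t'\signal{u}>\vec{0}$ for all $n\ge 1$. Each coordinate sequence is therefore a nonincreasing real sequence bounded below, so the vector sequence converges to some $\signal{p}''\in\real_{++}^K$; continuity of the affine mapping $T_{t'}$ then yields $\signal{p}''=T_{t'}(\signal{p}'')$. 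Finally, $\signal{p}''\le\signal{p}'$ together with monotonicity of $\|\cdot\|_\star$ (Remark~\ref{remark.norm}) gives $\|\signal{p}''\|_\star\le \|\signal{p}'\|_\star\le 1$, so $(t',\signal{p}'')$ is indeed feasible for \eqref{eq:P2}, delivering the desired contradiction and completing the proof.
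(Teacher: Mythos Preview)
Your argument is correct. The first sub-claim is immediate from the definitions, and in the second sub-claim the monotone iteration $\signal{p}^{(n+1)}=T_{t'}(\signal{p}^{(n)})$ started at $\signal{p}^{(0)}=\signal{p}'$ indeed converges to a fixed point $\signal{p}''\in\real_{++}^K$ with $\|\signal{p}''\|_\star\le 1$, yielding the desired contradiction.

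As for comparison with the paper: the paper does not give its own proof of this proposition. It simply states that the result ``can be proved using standard arguments in the literature'' (pointing to the discussion around \cite[Eq.~(3.34)]{demir2021}) and omits the details. Your proof is precisely one of those standard arguments: the first half is the usual observation that the fixed-point equation forces all per-user SINRs to be equal, and the second half is the classical Yates-type monotone descent showing that any power vector achieving a target SINR level can be reduced to a fixed point of the corresponding scaled interference mapping without leaving the feasible set. So there is no methodological divergence to report; you have supplied exactly the kind of elementary, self-contained argument the paper chose to leave to the reader.
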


In light of Proposition~\ref{proposition.epigraph}, we can obtain a solution to Problem~\eqref{eq:P1} by solving  Problem~\eqref{eq:P2}, which has a unique solution with the simple expression derived below.

\begin{proposition}
	\label{proposition.main}
Problem~\eqref{eq:P2} has a unique solution $(t^\star,\signal{p}^\star)\in\real_{++}\times \real_{++}^K$ given by
	\begin{align}
		\label{eq.topt}
		t^\star=\dfrac{1}{\underset{n\in\{1,\ldots,N\}}{\max}\rho(\signal{M}_n)}
	\end{align} 
	and
	\begin{align}
		\label{eq.optp}
		\signal{p}^\star=t^\star(\signal{I}-t^\star\signal{M})^{-1}\signal{u},
	\end{align}
where
	\begin{equation}\label{eq:Mn}
	(\forall n\in\{1,\ldots,N\})~	
		\signal{M}_n:=\signal{M}+\dfrac{1}{p_\mathrm{max}}\signal{u}\signal{a}_n^\T.
	\end{equation}
Furthermore, $\vec{p}^\star$ is also the eigenvector with norm $\|\signal{p}^\star\|_\star=1$ associated with the largest eigenvalue of any matrix $\vec{M}_n$ in \eqref{eq:Mn} satisfying $\rho(\vec{M}_n) = 1/{t^\star}$ for some $n\in\{1,\ldots,N\}$. 
\end{proposition}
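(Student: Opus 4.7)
The plan is to reduce Problem~\eqref{eq:P2} to the conditional eigenvalue problem of Proposition~1, applied to the affine, positive, concave (hence standard interference) mapping $T(\signal{p}) \eqdef \signal{M}\signal{p}+\signal{u}$ with the monotone norm $\|\cdot\|_\star$ from \eqref{eq.norm}. Proposition~1 yields a unique $(\lambda^\star,\signal{p}^\star)\in\real_{++}\times\real_{++}^K$ satisfying $\signal{M}\signal{p}^\star+\signal{u}=\lambda^\star\signal{p}^\star$ and $\|\signal{p}^\star\|_\star=1$, and I would propose $(t^\star,\signal{p}^\star)\eqdef(1/\lambda^\star,\signal{p}^\star)$ as the unique solution of \eqref{eq:P2}. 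Feasibility is immediate upon rewriting the eigenequation as $\signal{p}^\star=t^\star(\signal{M}\signal{p}^\star+\signal{u})$, so the work lies in proving optimality, the formulas \eqref{eq.topt}--\eqref{eq.optp}, the eigenvector characterization, and uniqueness.

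To upper-bound any feasible $t$, I would exploit the rank-one structure of the matrices in \eqref{eq:Mn} through the key identity
\begin{equation*}
\signal{M}_n\signal{p}=\signal{M}\signal{p}+\frac{1}{p_{\max}}(\signal{a}_n^\T\signal{p})\signal{u}\le\signal{M}\signal{p}+\signal{u}=\frac{1}{t}\signal{p},
\end{equation*}
where the inequality uses the power constraint $\signal{a}_n^\T\signal{p}\le p_{\max}$ and the final equality uses the fixed-point constraint of \eqref{eq:P2}. Since $\signal{M}_n\ge \signal{0}$ and $\signal{p}\in\real_{++}^K$, Proposition~\ref{fact.spec_bound} yields $\rho(\signal{M}_n)\le 1/t$ for every $n$, so $t\le 1/\max_n\rho(\signal{M}_n)$.

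To show this bound is attained, I would pick an index $n^\star$ achieving the maximum in $\|\signal{p}^\star\|_\star=(1/p_{\max})\max_n\signal{a}_n^\T\signal{p}^\star=1$. The identity above, now with equality, gives $\signal{M}_{n^\star}\signal{p}^\star=\lambda^\star\signal{p}^\star$ with the \emph{positive} eigenvector $\signal{p}^\star$. Combining Proposition~\ref{fact.spec_bound} (which gives $\rho(\signal{M}_{n^\star})\le\lambda^\star$) with the trivial bound $\lambda^\star\le\rho(\signal{M}_{n^\star})$ coming from $\lambda^\star$ being an eigenvalue of $\signal{M}_{n^\star}$ yields $\rho(\signal{M}_{n^\star})=\lambda^\star$, establishing \eqref{eq.topt}. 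For \eqref{eq.optp}, I would note $\signal{M}\signal{p}^\star=\lambda^\star\signal{p}^\star-\signal{u}$; choosing $\epsilon\eqdef\min_k u_k/p_k^\star>0$ gives $\signal{M}\signal{p}^\star\le(\lambda^\star-\epsilon)\signal{p}^\star$, so Proposition~\ref{fact.spec_bound} produces $\rho(\signal{M})\le\lambda^\star-\epsilon<1/t^\star$. Hence $\signal{I}-t^\star\signal{M}$ is invertible, and rearranging $\signal{p}^\star=t^\star(\signal{M}\signal{p}^\star+\signal{u})$ produces \eqref{eq.optp}; any other maximizer solves the same invertible system and must coincide with $\signal{p}^\star$.

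The closing eigenvector claim follows by the same gap technique: for any $n$ with $\rho(\signal{M}_n)=1/t^\star$, if $\signal{a}_n^\T\signal{p}^\star<p_{\max}$ strictly, the identity yields $\signal{M}_n\signal{p}^\star=(1/t^\star)\signal{p}^\star-\eta\signal{u}$ for some $\eta>0$; dominating $\eta\signal{u}$ componentwise against $\signal{p}^\star$ as in the previous paragraph gives $\signal{M}_n\signal{p}^\star\le(1/t^\star-\eta')\signal{p}^\star$ with $\eta'>0$, contradicting $\rho(\signal{M}_n)=1/t^\star$ via Proposition~\ref{fact.spec_bound}. Hence the $n$-th constraint is active and $\signal{M}_n\signal{p}^\star=(1/t^\star)\signal{p}^\star$. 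The main obstacle I anticipate is precisely this bookkeeping of strict versus non-strict inequalities, since Proposition~\ref{fact.spec_bound} as stated only provides the weak direction $\signal{M}\signal{x}\le\lambda\signal{x}\Rightarrow\rho(\signal{M})\le\lambda$, so every step needing strict spectral separation (the invertibility of $\signal{I}-t^\star\signal{M}$ and the eigenvector uniqueness) has to be reduced to the weak form by extracting an explicit positive gap.
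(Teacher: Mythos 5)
Your proof is correct, and its computational core --- the rank-one inequality $\signal{M}_n\signal{p}\le \signal{M}\signal{p}+\signal{u}=(1/t)\signal{p}$ combined with Proposition~\ref{fact.spec_bound} --- coincides with the paper's chain \eqref{eq.matineq}. The difference lies in the logical scaffolding. The paper invokes the equivalence from \cite{nuzman07} (see also \cite{renato2019}) wholesale: $(t^\star,\signal{p}^\star)$ solves \eqref{eq:P2} \emph{if and only if} $(1/t^\star,\signal{p}^\star)$ is the unique solution of the conditional eigenvalue problem \eqref{eq:P0}, which delivers existence, optimality, and uniqueness by citation, after which the formula \eqref{eq.topt} is read off at the solution. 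You instead use Proposition~1 only to manufacture a candidate, and you verify optimality from scratch by bounding every feasible $t$ through $\rho(\signal{M}_n)\le 1/t$; this makes the optimality claim self-contained (at the price of repeating the same inequality twice, once for the bound and once for attainment). For the invertibility of $\signal{I}-t^\star\signal{M}$ the routes genuinely diverge: the paper argues by contradiction with the already-established uniqueness, whereas you extract an explicit spectral gap $\rho(\signal{M})\le\lambda^\star-\epsilon$ with $\epsilon:=\min_k u_k/p_k^\star>0$ and then \emph{deduce} uniqueness of the maximizer from invertibility of the linear system --- reversing the paper's logical order, and arguably more robust, since the paper's parenthetical argument tacitly assumes the extra fixed points would violate the constraint $\|\signal{p}\|_\star\le 1$ analysis. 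Your treatment of the final eigenvector claim is also more complete: the paper's chain \eqref{eq.matineq} establishes it only for indices attaining the norm maximum, while your gap argument shows that \emph{every} $n$ with $\rho(\signal{M}_n)=1/t^\star$ must have an active constraint $\signal{a}_n^\T\signal{p}^\star=p_\mathrm{max}$, which is literally what the last sentence of the proposition asserts.

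One microscopic edge case in your gap argument: if $\signal{M}\signal{p}^\star=\vec{0}$ (equivalently $\signal{M}=\vec{0}$, since $\signal{M}\ge\vec{0}$ and $\signal{p}^\star\in\real_{++}^K$), then $\lambda^\star-\epsilon=\max_k (\signal{M}\signal{p}^\star)_k/p_k^\star=0$, and Proposition~\ref{fact.spec_bound}, which requires $\lambda\in\real_{++}$, cannot be applied as stated; but then $\rho(\signal{M})=0<1/t^\star$ holds trivially, so a one-line remark closes this case and the conclusion stands.
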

\begin{proof}
	It follows from Remark~\ref{remark.norm} and the results in \cite{nuzman07} (see also the discussion in \cite{renato2019}) that $(t^\star,\signal{p}^\star)\in\real_+\times\real_{++}^K$ solves Problem~\eqref{eq:P2} if and only if $(1/t^\star,\signal{p}^\star)$ is the unique solution to the condition eigenvalue problem~\eqref{eq:P0} with the  monotone norm $\|\cdot\|_\star$ in \refeq{eq.norm} and the mapping 
	\begin{align}
		\label{eq.mapping}
		T:\real_{+}^K \to \real_{++}^K:\signal{p}\mapsto \signal{Mp}+\signal{u}.
	\end{align}
	As a result, from the definition of problem~\eqref{eq:P0}, we verify that
\begin{equation}
		\label{eq.pstar_prop}
		\signal{p}^\star = t^\star (\signal{Mp}^\star+\signal{u}), \quad \|\signal{p}^\star\|_\star=1.
\end{equation}
Now let $i\in\{1,\ldots,N\}$ be any index satisfying
	\begin{equation}
		\label{eq.ineq_norm}
		\|\signal{p}^\star\|_\star = \frac{1}{p_{\max}} \signal{a}_{i}^\T\signal{p}^\star \ge \frac{1}{p_{\max}} \mathrm{max}_{n\in\{1,\ldots,N\}} \signal{a}_{n}^\T\signal{p}^\star.
\end{equation}
We then have $(\forall n\in\{1,\ldots,N\})$
\begin{equation}\label{eq.matineq}
	\begin{split}
		 \signal{M}_n\signal{p}^\star &= \signal{Mp}^\star+~\dfrac{1}{p_\mathrm{max}}\signal{u}\signal{a}_n^\T\signal{p}^\star \\
		&\overset{\text{(a)}}{\leq} \signal{Mp}^\star+~\dfrac{1}{p_\mathrm{max}}\signal{u}\signal{a}_i^\T\signal{p}^\star \\
		&= \signal{M}_i \signal{p}^\star \overset{\text{(b)}}{=} \dfrac{1}{t^\star} \signal{p}^\star,
	\end{split}
\end{equation}
where (a) follows from the inequality in \refeq{eq.ineq_norm} and positivity of $\signal{u}$, and (b) follows from \refeq{eq.pstar_prop} and the equality in \eqref{eq.ineq_norm}. Consequently, the equality (b) in \refeq{eq.matineq} shows that the solution $(t^\star,\signal{p}^\star)$ to problem~\eqref{eq:P2} can be obtained from an eigenpair $(1/t^\star,\vec{p}^\star)\in\real_{++}\times\real^K_{++}$ of the matrix $\signal{M}_i$ with $\|\signal{p}^\star\|_\star=1$. Furthermore, in light of Proposition~\ref{fact.spec_bound}, we verify from the equality (b) in \refeq{eq.matineq} that $1/t^\star>0$ is the spectral radius of $\signal{M}_i$ (i.e., $t^\star = 1/\rho(\signal{M}_i)$) and from the inequality (a) in \refeq{eq.matineq} that 
	\begin{align}
		\label{eq.boundspec}
		(\forall n\in\{1,\ldots,N\})~ \rho(\signal{M}_n) \le \rho( \signal{M}_i).
	\end{align}
This completes the proof of \refeq{eq.topt} and the statement in the last sentence of the proposition. The equality in \refeq{eq.optp} follows directly from the definition of the equality constraint of Problem~\eqref{eq:P2}, and the proof is complete. (Note that the matrix $(\signal{I}-t^\star\signal{M})$ is full rank because otherwise there would exist infinitely many vectors $\signal{p}\in\real_{++}^K$ satisfying $\signal{p}=t^\star(\signal{Mp}+\signal{u})$, thus contradicting uniqueness of the solution to Problem~\eqref{eq:P2}, which has already been proved above.)
\end{proof}

\subsection{Bound on the optimal utility}
The next corollary, which is immediate from \cite[Proposition 3]{renato2019}, shows a simple bound for the optimal utility $t^\star$ in \refeq{eq.topt} as a function of $p_\mathrm{max}$:
\begin{corollary}
	\label{cor.maxminbound}
Assume that $\rho(\signal{M})>0$, and denote by $(t^\star(p_\mathrm{max}), \signal{p}^\star(p_\mathrm{max}))$ the solution in \refeq{eq.topt} and \refeq{eq.optp} to an instance of Problem \eqref{eq:P2}, where we made explicit the dependency of this solution on the choice  of the maximum transmit power $p_\mathrm{max}>0$. Then $(\forall p_\mathrm{max}\in\real_{++})$ 
\begin{align}\label{eq.tmax} \quad t^\star(p_\mathrm{max})~ \le \begin{cases} 1/\rho(\signal{M})&\text{if } p_\mathrm{max} \ge p_\mathrm{T},\\
			p_\mathrm{max}/\|\signal{u}\|&\text{otherwise},
		\end{cases}
	\end{align}
	where $p_\mathrm{T}$ is the transition point defined by $p_\mathrm{T}:=\|\signal{u}\|/\rho(\signal{M})$, and $\|\cdot\|$ is the monotone norm 
	\begin{align*}
		(\forall \signal{p}\in\real^K)~\|\signal{p}\| :=  \max_{n\in\{1,\ldots,N\}} \signal{a}_n^\T |\signal{p}|.
	\end{align*}
	Furthermore, the bound in \refeq{eq.tmax} is asymptotically tight as $p_\mathrm{max}\to 0^+$ and as $p_\mathrm{max}\to \infty$.
\end{corollary}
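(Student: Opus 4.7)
The plan is to combine the closed-form expression $t^\star(p_\mathrm{max})=1/\max_{n}\rho(\signal{M}_n)$ from Proposition~\ref{proposition.main}, where $\signal{M}_n=\signal{M}+(1/p_\mathrm{max})\signal{u}\signal{a}_n^\T$, with two independent lower bounds on $\max_n\rho(\signal{M}_n)$: one that dominates in the ``interference-limited'' regime and one in the ``noise-limited'' regime. The piecewise bound in \eqref{eq.tmax} then falls out by taking the minimum of the two reciprocals.

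First I would establish $t^\star(p_\mathrm{max})\le 1/\rho(\signal{M})$. Because $\signal{u}\in\real_{++}^K$ and $\signal{a}_n\in\real_+^K$, the rank-one correction $(1/p_\mathrm{max})\signal{u}\signal{a}_n^\T$ is entrywise nonnegative, so $\signal{M}_n\ge\signal{M}$. Standard monotonicity of the Perron--Frobenius eigenvalue for nonnegative matrices then yields $\rho(\signal{M}_n)\ge\rho(\signal{M})$ for every $n$, and the bound follows by maximizing over $n$ and inverting.

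Next I would establish $t^\star(p_\mathrm{max})\le p_\mathrm{max}/\|\signal{u}\|$. The matrix $(1/p_\mathrm{max})\signal{u}\signal{a}_n^\T$ has rank one, so its spectral radius equals its trace $(1/p_\mathrm{max})\signal{a}_n^\T\signal{u}$; the same monotonicity applied to $\signal{M}_n\ge(1/p_\mathrm{max})\signal{u}\signal{a}_n^\T$ yields $\rho(\signal{M}_n)\ge(1/p_\mathrm{max})\signal{a}_n^\T\signal{u}$. Maximizing over $n$ and using the definition of $\|\cdot\|$ in the statement produces $\max_n\rho(\signal{M}_n)\ge\|\signal{u}\|/p_\mathrm{max}$. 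Combining with the previous step gives $t^\star(p_\mathrm{max})\le\min\{1/\rho(\signal{M}),\,p_\mathrm{max}/\|\signal{u}\|\}$, which is exactly the piecewise expression with crossing point $p_\mathrm{T}=\|\signal{u}\|/\rho(\signal{M})$.

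For the asymptotic tightness I would invoke continuity of the spectral radius of a nonnegative matrix in its entries. As $p_\mathrm{max}\to\infty$, $\signal{M}_n\to\signal{M}$ for every $n$, so $\max_n\rho(\signal{M}_n)\to\rho(\signal{M})$ and $t^\star(p_\mathrm{max})\to 1/\rho(\signal{M})$; combined with the upper bound this forces asymptotic equality. As $p_\mathrm{max}\to 0^+$, $p_\mathrm{max}\signal{M}_n\to\signal{u}\signal{a}_n^\T$, whose spectral radius equals $\signal{a}_n^\T\signal{u}$, so $p_\mathrm{max}\max_n\rho(\signal{M}_n)\to\|\signal{u}\|$, i.e., $t^\star(p_\mathrm{max})\,\|\signal{u}\|/p_\mathrm{max}\to 1$. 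The main obstacle I anticipate is being precise about monotonicity of $\rho(\cdot)$ on nonnegative matrices, since Proposition~\ref{fact.spec_bound} gives only a one-sided version; a self-contained derivation can be obtained by pairing a Perron eigenvector of $\signal{M}_n^\T$ with one of $\signal{M}$ (resp.\ of $\signal{u}\signal{a}_n^\T$) and applying Proposition~\ref{fact.spec_bound} in the appropriate direction, though the cleanest route is to cite this classical Perron--Frobenius fact directly.
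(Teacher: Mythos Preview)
Your argument is correct and complete. The two lower bounds on $\max_n\rho(\signal{M}_n)$ obtained via entrywise monotonicity of the Perron eigenvalue, together with continuity of $\rho(\cdot)$ for the asymptotics, yield exactly the claimed piecewise bound and its tightness.

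The paper, however, does not prove this corollary directly: it simply invokes \cite[Proposition~3]{renato2019}, a general bound on the conditional eigenvalue of an arbitrary standard interference mapping in terms of its asymptotic (positively homogeneous) part and the norm of $T(\signal{0})$. Specialized to the affine mapping $T(\signal{p})=\signal{Mp}+\signal{u}$, the asymptotic mapping is $\signal{p}\mapsto\signal{Mp}$ with spectral radius $\rho(\signal{M})$, and $\|T(\signal{0})\|=\|\signal{u}\|$, which recovers \eqref{eq.tmax}. Your route is genuinely different: rather than going through the interference-calculus machinery, you exploit the closed form $t^\star=1/\max_n\rho(\signal{M}_n)$ from Proposition~\ref{proposition.main} and reduce everything to elementary Perron--Frobenius monotonicity. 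This is more self-contained and arguably better aligned with the spirit of the paper, whose point is precisely that the affine structure makes things explicit; the cited approach, on the other hand, would extend verbatim to nonaffine standard interference mappings. Your caveat about monotonicity is well placed: since the matrices $\signal{M}_n$ need not have strictly positive Perron eigenvectors, Proposition~\ref{fact.spec_bound} alone does not quite close the gap, and citing the classical fact that $0\le\signal{A}\le\signal{B}\Rightarrow\rho(\signal{A})\le\rho(\signal{B})$ is the cleanest fix.
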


%

\begin{remark}
	\label{remark.transition}
	As discussed in \cite{renato2019}, in a power control problem, as the maximum transmit power $p_\mathrm{max}$ increases, the network configuration obtained by solving Problem~\eqref{eq:P1} moves from a noise limited regime to an interference limited regime around the transition point $p_\mathrm{T}$ defined in Corollary~\ref{cor.maxminbound}.
\end{remark}

\section{Applications to cell-free networks}
\label{sec:cellfree}
\subsection{System model}
We now apply the above results to a cell-free massive MIMO network, as defined in \cite{demir2021}, composed of $L$ access-points (APs) indexed by $\mathcal{L}:=\{1,\ldots,L\}$, each of them equipped with $M$ antennas, and $K$ single-antenna user equipments (UEs) indexed by $\mathcal{K}:=\{1,\ldots,K\}$. The UL and DL ergodic rates simultaneously achievable by each UE can be lower bounded by \cite{marzetta16,massivemimobook} $(\forall k\in\set{K})$
\begin{align*}
R_k^{\mathrm{UL}}(\vec{p}) &\eqdef \log_2(1+\mathrm{SINR}_k^{\mathrm{UL}}(\vec{p})) \quad \text{[bit/s/Hz]}, \\
R_k^{\mathrm{DL}}(\vec{p}) &\eqdef \log_2(1+\mathrm{SINR}_k^{\mathrm{DL}}(\vec{p})) \quad \text{[bit/s/Hz]},
\end{align*}
\begin{align*}
\mathrm{SINR}_k^{\mathrm{UL}}(\vec{p}) &\eqdef \dfrac{p_k|\E[\rvec{h}_k^\herm\rvec{v}_k]|^2}{p_k\Var(\rvec{h}_k^\herm\rvec{v}_k)+\sum_{j\neq k}p_j\E[|\rvec{h}_j^\herm\rvec{v}_k|^2]+\sigma}, \\
\mathrm{SINR}_k^{\mathrm{DL}}(\vec{p}) &\eqdef \dfrac{p_k|\E[\rvec{h}_k^\herm\rvec{v}_k]|^2}{p_k\Var(\rvec{h}_k^\herm\rvec{v}_k)+\sum_{j\neq k}p_j\E[|\rvec{h}_k^\herm\rvec{v}_j|^2]+\sigma},
\end{align*}
where $\rmat{h}_{k}$ is a \textit{random} vector taking values in $\stdset{C}^{ML}$, which models the fading channel between all APs and UE~$k$; $\rvec{v}_k$ is a \textit{random} vector taking values in $\stdset{C}^{ML}$, which satisfies without loss of generality $\E[\|\rvec{v}_k\|^2]=1$ {(here, $\|\cdot\|$ denotes the standard $l^2$ norm), and it is applied by all APs to filter jointly the signal of UE~$k$; $\sigma>0$ is the noise variance; and $\vec{p} = [p_1,\ldots,p_K]^\T \geq \vec{0}$ is a vector of \textit{deterministic} (i.e., long term) power scaling coefficients. 

In the context of cell-free massive MIMO, $p_k$ should be interpreted as the transmit power of UE $k$ for the UL case, or as the \textit{total} transmit power used by the network to serve UE $k$ for the DL case. Similarly, $\rvec{v}_k$ should be interpreted as a joint UL \textit{combiner} for the message of UE $k$, or as a joint DL \textit{precoder} for the message of UE $k$. In practice, the filters $\{\rvec{v}_k\}_{k\in\set{K}}$ need to satisfy additional constraints modeling limited APs cooperation, for instance, w.r.t. channel state information sharing (CSI) and joint encoding/decoding capabilities. Popular examples are the \textit{distributed} or \textit{centralized} \textit{user-centric} clustered models reviewed in \cite{demir2021}, or the special case of \textit{cellular} massive MIMO \cite{massivemimobook}. Since the focus of this article is on power control, we omit the details on such constraints, and refer to \cite{miretti2021team,miretti2021team2} for details. In the following, we simply assume that $\{\rvec{v}_k\}_{k\in\set{K}}$ are given, and that the expectations in the following matrices and vector exist: 
\begin{equation*}
\vec{G} \eqdef \begin{bmatrix}
\E[|\rvec{h}_1^\herm\rvec{v}_1|^2] & \ldots & \E[|\rvec{h}_1^\herm\rvec{v}_K|^2] \\
\vdots & \ddots & \vdots \\
\E[|\rvec{h}_K^\herm\rvec{v}_1|^2] & \ldots & \E[|\rvec{h}_K^\herm\rvec{v}_K|^2] 
\end{bmatrix},
\end{equation*}
\begin{equation*}
\vec{d} \eqdef \left(|\E[\rvec{h}_1^\herm\rvec{v}_1]|^2, \ldots, |\E[\rvec{h}_K^\herm\rvec{v}_K]|^2\right), \quad \vec{D} \eqdef \mathrm{diag}(\vec{d}).
\end{equation*}

\subsection{Uplink power control with per-UE power constraint}\label{ssec:UL}
By fixing an arbitrary combining design, and a per-UE power constraint $(\forall k\in\set{K})$ $p_k\leq p_{\max}$, each positive point on the boundary of the UL rate region achieved by power control can be obtained as the solution to problems of the type
\begin{equation}\label{prob.UL}
\underset{\vec{p}\geq \vec{0},~\|\vec{p}\|_{\infty}\leq p_{\max}}{\mathrm{maximize}}\min_{k\in\set{K}}\omega_k^{-1}\mathrm{SINR}^{\mathrm{UL}}_k(\vec{p})
\end{equation}
for some vector of weights $\bm{\omega}\eqdef(\omega_1,\ldots,\omega_K)\in\stdset{R}_{++}^K$. The solution to Problem~\eqref{prob.UL} can be computed in closed form following the methodology in Section~\ref{sec:power_control}. Specifically, we can map Problem~\eqref{prob.UL} to Problem~\eqref{eq:P1}, and apply Proposition~\ref{proposition.epigraph} and Proposition~\ref{proposition.main}, after identifying
\begin{equation*}
\vec{A}=\vec{I}_K, \quad \vec{b} = \mathrm{diag}(\vec{\omega})^{-1}\vec{d},\quad \vec{C} = \vec{G}-\vec{D}, \quad \vec{\sigma} = \sigma\vec{1}.
\end{equation*}

\begin{remark} An alternative way of finding boundary points on the rate region achieved by power control is to consider problems of the type 
\begin{equation}\label{prob.rate.UL}
\underset{\vec{p}\geq \vec{0},~\|\vec{p}\|_{\infty}\leq p_{\max}}{\mathrm{maximize}}\min_{k\in\set{K}}\omega_k^{-1}R^{\mathrm{UL}}_k(\vec{p}).
\end{equation}
In general, using the same weights $\vec{\omega}$ in \eqref{prob.rate.UL} and \eqref{prob.UL} produces different boundary points. A notable exception is the choice $\vec{\omega}=\vec{1}$, which produces in both cases the so-called \emph{max-min fair} point. This boundary point is popular in the cell-free literature, where the main motivation is indeed to improve user fairness w.r.t. cellular networks \cite{ngo2017cell}.
\end{remark}

\subsection{Downlink power control with sum power constraint}
Similarly to the UL case, by fixing an arbitrary precoding design and a sum power constraint $\sum_{k\in\set{K}}p_k\leq p_{\max}$, each positive point on the boundary of the DL rate region achieved by power control can be obtained as the solution to problems of the type
\begin{equation}\label{prob.DL}
\underset{\vec{p}\geq \vec{0},~\|\vec{p}\|_1\leq p_{\max}}{\mathrm{maximize}}\min_{k\in\set{K}}\omega_k^{-1}\mathrm{SINR}^{\mathrm{DL}}_k(\vec{p}).
\end{equation}
for some vector of weights $\bm{\omega}\eqdef(\omega_1,\ldots,\omega_K)\in\stdset{R}_{++}^K$. The solution to Problem~\eqref{prob.DL} can be similarly computed by applying Proposition~\ref{proposition.epigraph} and Proposition~\ref{proposition.main} after identifying
\begin{equation*}
\vec{A}=\vec{1}_{K\times 1}, \quad \vec{b} = \mathrm{diag}(\vec{\omega})^{-1}\vec{d},\quad \vec{C} = \vec{G}^\T-\vec{D}, \quad \vec{\sigma} = \sigma\vec{1}.
\end{equation*}
The main difference w.r.t. the UL case is the transpose operator in the channel gain matrix $\vec{G}$, and the fact that using a single linear power constraint removes the $\max$ operator in the expression of the optimal solution \eqref{eq.topt}.
\begin{remark} The considered DL sum power constraint refers to the power radiated by all APs in the network. Although this is still an important metric for network design/management, in practice, due to hardware and regulatory constraints, each AP (or even antenna) is also typically subject to an individual power constraint. We remark that imposing a per-AP power constraint is different from constraining each $p_k$ as in the UL case. However, solving instances of Problem~\eqref{eq:P1} can still be useful for obtaining heuristic solutions, for instance, based on a suitable design of $\{\rvec{v}_k\}_{k\in\set{K}}$ and $(\vec{A},p_{\max})$ such that the  feasibility under a per-AP power constraint is guaranteed. 
\end{remark}

\begin{remark}
For the special case of a cellular network, where each AP serves a disjoint set of UEs, a per-AP power constraint in the DL can be mapped to $L$ linear constraints $\{\vec{a}_l\}_{l\in\set{L}}$, each acting on a separate subvector of $\vec{p}$ corresponding to the UEs served by AP $l$. Therefore, the considered power control framework covers this case.
\end{remark}

\subsection{Numerical examples}
\begin{figure}
\centering
\includegraphics[width=1\columnwidth]{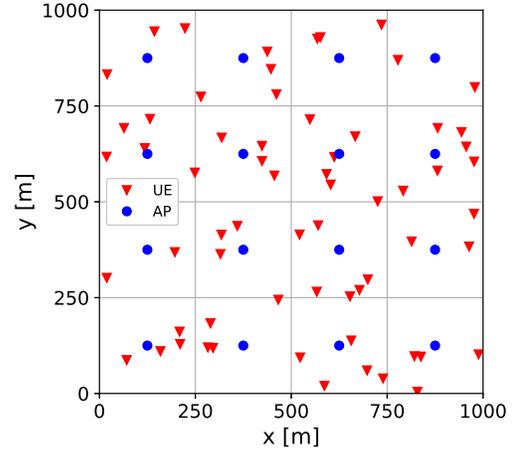}
\caption{Pictorial representation of the simulated setup: $K=64$ UEs uniformly distributed within a squared service area of size $1\times 1~\text{km}^2$, and $L=16$ regularly spaced APs with $M=8$ antennas each. In the cell-free setup, each UE is jointly served by a cluster of $Q=4$ APs offering the strongest channel gains. The cellular case is obtained as a special case by letting $Q=1$.  }
\label{fig:network}
\end{figure}
\begin{figure}
\centering
\includegraphics[width=0.8\columnwidth]{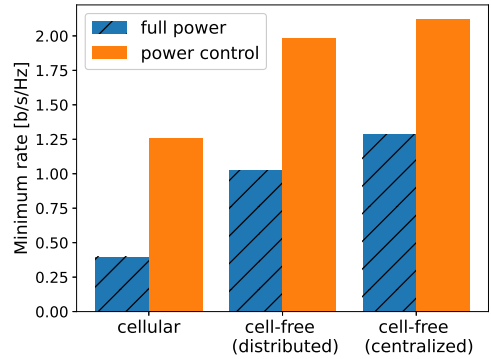}
\caption{Comparison of optimal max-min fair UL power control versus full power transmission for a UE power budget $p_{\max}=20$ dBm, under different AP cooperation regimes. Power control significantly boosts user fairness, especially for regimes with lower interference suppression capabilities.} 
\label{fig:comparison}
\end{figure}
\begin{figure*}[!t]
\centering
\subfloat[]{\includegraphics[width=0.31\textwidth]{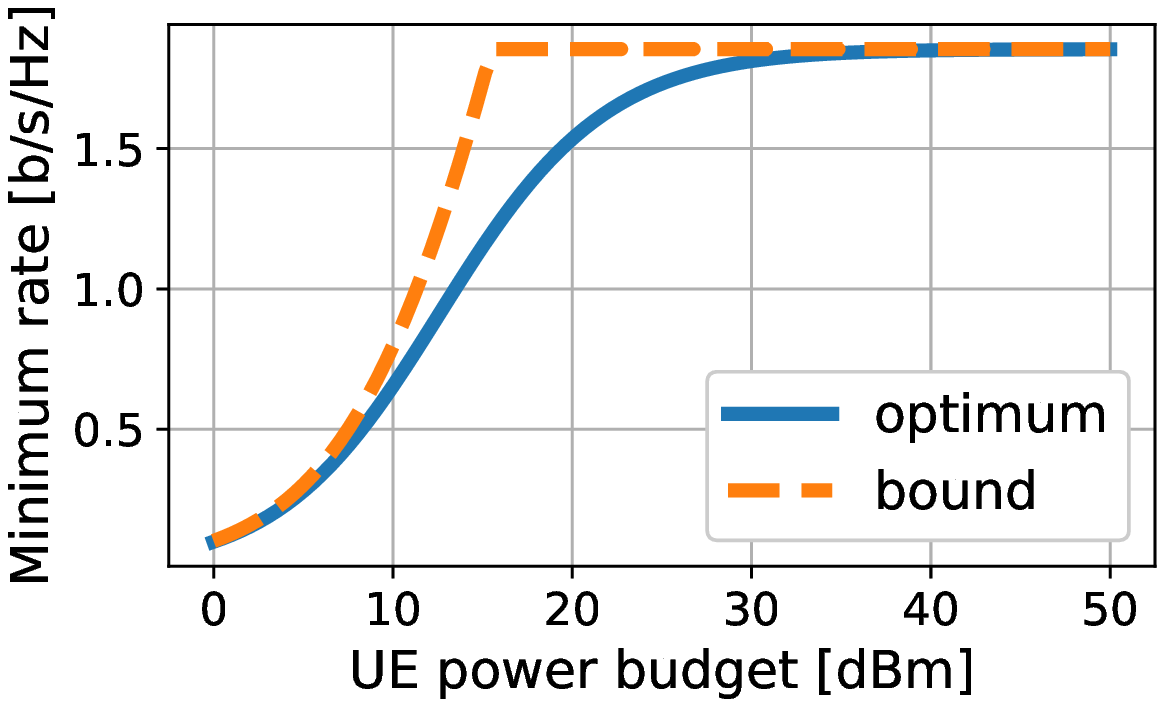}
\label{fig:cell}}
\hfil
\subfloat[]{\includegraphics[width=0.31\textwidth]{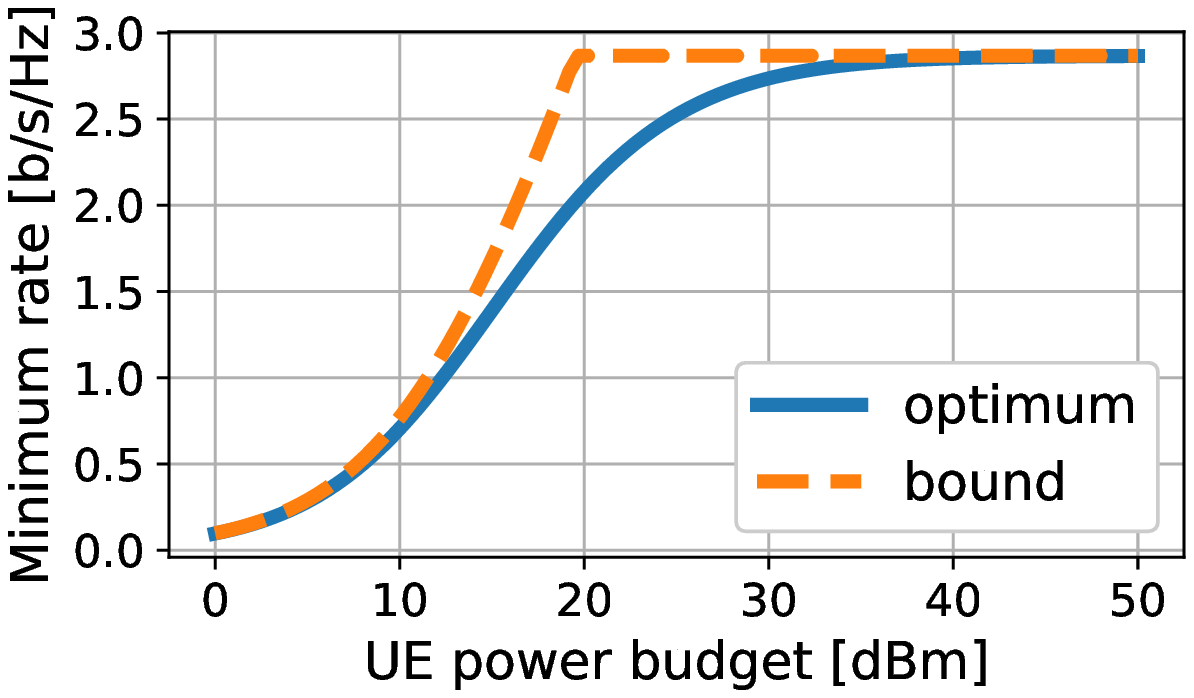}
\label{fig:distr}}
\hfil
\subfloat[]{\includegraphics[width=0.31\textwidth]{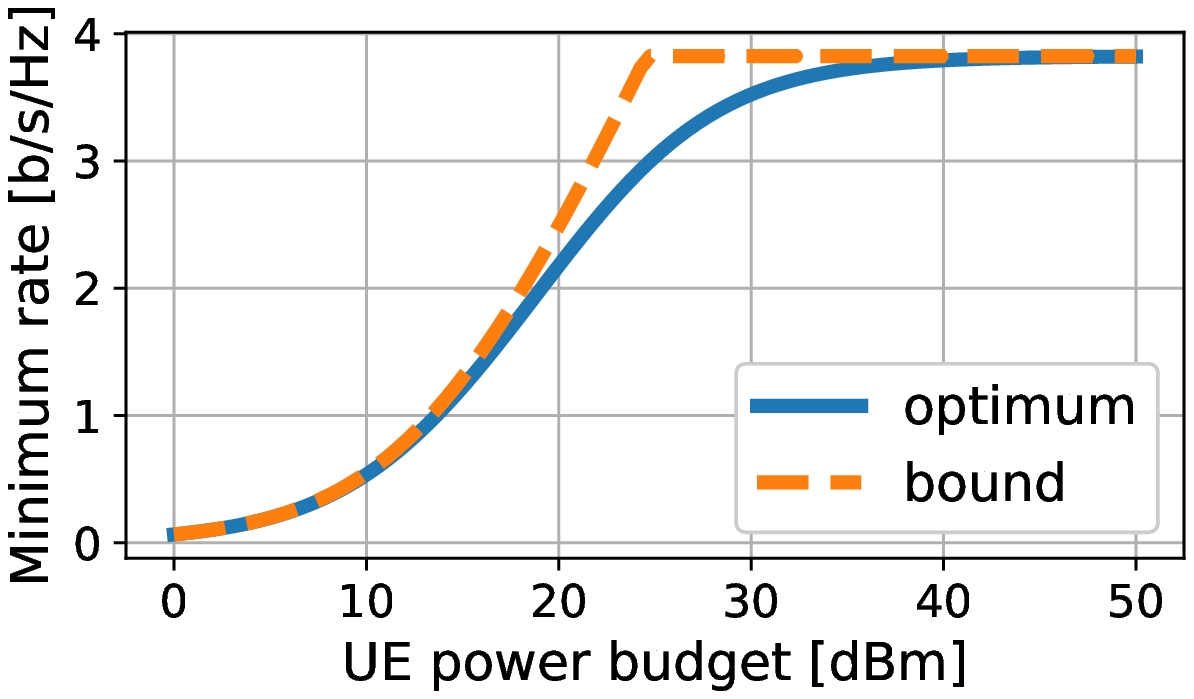}
\label{fig:centr}}
\caption{Optimal max-min fair UL power control for different UE power budgets $p_{\max}$, and related upper bound computed using \eqref{eq.tmax}, for the case of (a) cellular network; (b) distributed cell-free network; (c) centralized cell-free network. By computing the spectral radius of a single known matrix independent from $p_{\max}$, the proposed bound provides accurate performance estimates for the noise limited $(p_{\max}\to 0^+)$ and interference limited $(p_{\max}\to \infty)$ regimes. Furthermore, it also offers a useful rule-of-thumb for estimating the transition point between the two regimes.}
\label{fig:bound}
\end{figure*}
We consider the network depicted in Figure~\ref{fig:network}, where $K=64$ UEs are uniformly distributed within a squared service area of size $1\times 1~\text{km}^2$, and $L=16$ regularly spaced APs with $M=8$ antennas each. By neglecting for simplicity channel correlation, we let each sub-vector $\rvec{h}_{l,k}$ of $\rvec{h}_k^{\herm} =: [\rvec{h}_{1,k}^\herm, \ldots \rvec{h}_{L,k}^\herm]$ be independently distributed as $\rvec{h}_{l,k} \sim \CN\left(\vec{0}, \gamma_{l,k}\vec{I}_M\right)$, where $\gamma_{l,k}>0$ denotes the channel gain between AP $l$ and UE $k$. We follow the same 3GPP-like path-loss model adopted in \cite{demir2021} for a $2$ GHz carrier frequency:
\begin{equation*}
\gamma_{l,k} = -21.9 \log_{10}\left(\dfrac{D_{l,k}}{1 \; \mathrm{m}}\right) -30.5 + Z_{l,k} \quad \text{[dB]},
\end{equation*}
where $D_{l,k}$ is the distance between AP $l$ and UE $k$ including a difference in height of $10$ m, and $Z_{l,k}\sim \CN(0,\rho^2)$ [dB] are shadow fading terms with standard deviation $\rho = 4$. The shadow fading is correlated as $\E[Z_{l,k}Z_{j,i}]=\rho^22^{-\frac{\delta_{k,i}}{9 \text{ [m]}}}$ for all $l=j$ and zero otherwise, where $\delta_{k,i}$ is the distance between UE $k$ and UE $i$. The noise power is 
$\sigma = -174 + 10 \log_{10}(B) + F$ [dBm],
where $B = 20$ MHz is the system bandwidth, and $F = 7$ dB is the noise figure.  

We focus on UL power control as described in Section~\ref{ssec:UL}. For the choice of $\{\rvec{v}_k\}_{k\in\set{K}}$, we consider optimal combiners maximizing $R_k^{\mathrm{UL}}(p_{\max}\vec{1})$ under full power transmission\footnote{We recall that the proposed power control framework requires the combiners to be independent from the free variables $\vec{p}$.} $p_{\max} = 20$ dBm, and under different APs cooperation constraints. In particular, we assume that each UE $k$ is served only by its $Q$ strongest APs, that is, by the the subset of APs indexed by $\set{L}_k\subseteq \set{L}$, where each set $\set{L}_k$ is formed by ordering $\set{L}$ w.r.t. decreasing $\gamma_{l,k}$ and by keeping only the first $Q$ elements. Furthermore, we assume each AP $l$ to acquire (and possibly exchange) local CSI $\hat{\rvec{H}}_l\eqdef[\hat{\rvec{h}}_{l,1},\ldots,\hat{\rvec{h}}_{l,K}]$,  
\begin{equation*}
(\forall k \in\set{K})~\hat{\rvec{h}}_{l,k} \eqdef \begin{cases}
\rvec{h}_{l,k} & \text{if } l \in \set{L}_k,\\
\E[\rvec{h}_{l,k}] & \text{otherwise}.
\end{cases}
\end{equation*}
This model reflects the canonical cell-free massive MIMO implementation with pilot-based local channel estimation, and possible CSI sharing through the fronthaul; we neglect for simplicity estimation/quantization noise, and we focus on a simple model where small-scale fading coefficients are either perfectly known at some APs or completely unknown. In contrast, we assume that the relevant long-term statistical information is perfectly shared within the network. We then study the following AP cooperation regimes\footnote{To avoid cumbersome notation, the combiners described below are given in a unnormalized form. In the following simulations, a deterministic scaling factor is applied to each $\{\rvec{v}_k\}_{k\in\set{K}}$ s.t. $(\forall k \in\set{K})$ $\E[\|\rvec{v}_k\|^2]=1$ holds.}

\subsubsection{Cellular network} We assume each UE to be served only by its strongest AP, i.e., $Q=1$, and no CSI sharing. The $l$-th submatrix of the optimal combining matrix $\rvec{V}\eqdef[\rvec{v}_1,\ldots,\rvec{v}_K]$ corresponding to AP $l$ is the known local MMSE solution \cite{massivemimobook}
\begin{equation*}
(\forall l\in \set{L})~\rvec{V}^{\mathrm{LMMSE}}_l \eqdef \left(\hat{\rvec{H}}_l\hat{\rvec{H}}_l^{\herm}+\vec{\Sigma}_l + \frac{\sigma}{p_{\max}}\vec{I}_M\right)^{-1}\hat{\rvec{H}}_l,
\end{equation*}
where $\vec{\Sigma}_l \eqdef \underset{i \in\{k\in\set{K} | l \notin \set{L}_k\}}{\sum}\gamma_{l,i}\vec{I}_M$ is the CSI error covariance.
\subsubsection{Distributed cell-free network} We assume each UE to be jointly served by its $Q=4$ strongest APs, and no CSI sharing. The $l$-th submatrix of the optimal combining matrix $\rvec{V}\eqdef[\rvec{v}_1,\ldots,\rvec{v}_K]$ corresponding to AP $l$ is the so-called local \textit{team} MMSE solution \cite{miretti2021team,miretti2021team2}:
\begin{equation*}
(\forall l\in \set{L})~\rvec{V}^{\mathrm{LTMMSE}}_l \eqdef \rvec{V}^{\mathrm{LMMSE}}_l\vec{W}_l,
\end{equation*}
where the columns of $\vec{W}_l=:[\vec{w}_{l,1},\ldots,\vec{w}_{l,K}]$ can be computed from the knowledge of $\vec{\Pi}_l\eqdef \E[\hat{\rvec{H}}_l^\herm\rvec{V}^{\mathrm{LMMSE}}_l]$ as the unique solution to the linear system of equations
\begin{equation*}
\begin{cases}
\vec{w}_{l,k} +\sum_{j\neq l}\vec{\Pi}_j\vec{w}_{j,k} = \vec{e}_k & \text{if } l\in\set{L}_k, \\
\vec{w}_{l,k} = \vec{0} & \text{otherwise}.
\end{cases}
\end{equation*}
\subsubsection{Centralized cell-free network} We assume each UE to be jointly served by its $Q=4$ strongest APs, perfectly sharing their CSI. For all $k \in\set{K}$, the subvector $\rvec{v}^{(k)}$ of the optimal joint combiner $\rvec{v}_k$ corresponding to the $Q$ APs serving UE $k$ is the $k$th column of the known MMSE solution \cite{demir2021} 
\begin{equation*}
\rvec{V}^{\mathrm{MMSE},(k)} \eqdef \left(\hat{\rvec{H}}^{(k)}(\hat{\rvec{H}}^{(k)})^{\herm}+\vec{\Sigma}^{(k)} + \frac{\sigma}{p_{\max}}\vec{I}_{QM}\right)^{-1}\hat{\rvec{H}}^{(k)}
\end{equation*}
where $\hat{\rvec{H}}^{(k)}$ is obtained from the global CSI matrix $\hat{\rvec{H}}^\herm \eqdef [\hat{\rvec{H}}_1^\herm, \ldots, \hat{\rvec{H}}_L^\herm]$ by removing from $\hat{\rvec{H}}$ the rows corresponding to the channels of all APs $\l\notin \set{L}_k$, and $\vec{\Sigma}^{(k)}$ is the corresponding CSI error covariance. 

Figure~\ref{fig:comparison} compares the performance of the considered cooperation regimes under optimal max-min fair power control, that is, by letting $\vec{\omega} = \vec{1}$ in Problem~\eqref{prob.UL}, for a UE power budget $p_{\max} = 20$ dBm. As a baseline, we also show the achieved utility (i.e., the minimum rate) by assuming full power transmission $\vec{p} = p_{\max}\vec{1}$. The results are averaged over $N_{\text{setups}}=100$ i.i.d. realizations of the UE positions. In line with related literature (see, e.g., \cite{bjornson2019making}), Figure~\ref{fig:comparison} highlights the importance of optimal power control, especially for tighter AP cooperation constraints impairing the spatial interference suppression capabilities.

Figure~\ref{fig:bound} illustrates the effectiveness of the proposed upper bound \eqref{eq.tmax} in predicting the performance of optimal max-min fair power control under arbitrary UE power budget $p_{\max}$. We remark that the bound depends only on fixed parameters $(\vec{G},\vec{d})$, i.e., on the choice of the combiners and on channel statistics; here, we use the same combiners as in Figure~\ref{fig:comparison}, i.e., the optimal combiners for $(\forall k \in\set{K})$ $p_k = 20$ dBm, and we focus on a single realization of the UE positions. The non-differentiable point in the upper bound curve corresponds to the point $p_T$ after which the system smoothly transitions from the noise limited to the interference limited regime. As predicted by theory, the proposed bound becomes especially tight as the system approaches these two extreme regimes. Furthermore, as intuitively expected, the transition point $p_T$ is smaller for tighter AP cooperation constraints. 

\section{Conclusion}
We have presented a simple expression for the solution to a common class of max-min power control problems, and illustrated its application to cellular and cell-free massive MIMO networks. In previous studies this solution has been characterized as the solution to a conditional eigenvalue problem, as the limit of sequences produced via very specific iterative methods, or as the solution to a system of linear equations that requires the bisection method for its construction. Having a novel characterization of the solution, we now open up the possibility of devising new scalable power control techniques for large-scale problems. In particular, we showed that it suffices to implement standard numerical routines for computing the largest eigenvalue (and the corresponding eigenvector) among a set of matrices constructed from the problem parameters. We also obtained a simple bound for the optimal utility, and simulations have shown that the bound is tight asymptotically, as predicted by theory. In addition, for a large interval of the maximum power constraint, this bound may be also close to the optimum.

\bibliographystyle{IEEEbib}
\bibliography{IEEEabrv,refs}

\begin{thebibliography}{10}

\bibitem{slawomir09}
S.~Sta\'nczak, M.~Wiczanowski, and H.~Boche,
\newblock {\em Fundamentals of Resource Allocation in Wireless Networks},
\newblock Foundations in Signal Processing, Communications and Networking.
  Springer, Berlin Heidelberg, 2nd edition, 2009.

\bibitem{emil2013resource}
E.~Bj{\"o}rnson and E.~Jorswieck,
\newblock {\em Optimal resource allocation in coordinated multi-cell systems},
\newblock Now Publishers Inc, 2013.

\bibitem{marzetta16}
T.~L. Marzetta, E.~G. Larsson, H.~Yang, and H.~Q. Ngo,
\newblock {\em Fundamentals of Massive {MIMO}},
\newblock Cambridge University Press, 2016.

\bibitem{ngo2017cell}
H.~Q. Ngo, A.~Ashikhmin, H.~Yang, E.~G. Larsson, and T.~L. Marzetta,
\newblock ``Cell-free massive {MIMO} versus small cells,''
\newblock {\em IEEE Transactions on Wireless Communications}, vol. 16, no. 3,
  pp. 1834--1850, 2017.

\bibitem{demir2021}
{\"O}.~T. Demir, E.~Bj{\"o}rnson, and L.~Sanguinetti,
\newblock ``Foundations of user-centric cell-free massive {MIMO},''
\newblock {\em Foundations and Trends in Signal Processing}, vol. 14, pp.
  162--472, 2021.

\bibitem{donato2021}
G.~Interdonato, H.~Q. Ngo, and E.~G. Larsson,
\newblock ``Enhanced normalized conjugate beamforming for cell-free massive
  {MIMO},''
\newblock {\em IEEE Transactions on Communications}, vol. 69, no. 5, pp.
  2863--2877, 2021.

\bibitem{buzzi2020}
S.~Buzzi, C.~D’Andrea, A.~Zappone, and C.~D’Elia,
\newblock ``User-centric {5G} cellular networks: Resource allocation and
  comparison with the cell-free massive {MIMO} approach,''
\newblock {\em IEEE Transactions on Wireless Communications}, vol. 19, no. 2,
  pp. 1250--1264, 2020.

\bibitem{chaves2020convergence}
R.~S. Chaves, E.~Cetin, M.~V.~S Lima, and W.~A. Martins,
\newblock ``On the convergence of max-min fairness power allocation in massive
  {MIMO} systems,''
\newblock {\em IEEE Communications Letters}, vol. 24, no. 12, pp. 2873--2877,
  2020.

\bibitem{yang2017massive}
H.~Yang and T.~L. Marzetta,
\newblock ``Massive {MIMO} with max-min power control in line-of-sight
  propagation environment,''
\newblock {\em IEEE Transactions on Communications}, vol. 65, no. 11, pp.
  4685--4693, 2017.

\bibitem{massivemimobook}
E.~Bj\"{o}rnson, J.~Hoydis, and L.~Sanguinetti,
\newblock ``Massive {MIMO} networks: {Spectral}, energy, and hardware
  efficiency,''
\newblock {\em Foundations and Trends{\textregistered} in Signal Processing},
  vol. 11, no. 3-4, pp. 154--655, 2017.

\bibitem{das1997closer}
I.~Das and J.~E. Dennis,
\newblock ``A closer look at drawbacks of minimizing weighted sums of
  objectives for {P}areto set generation in multicriteria optimization
  problems,''
\newblock {\em Structural optimization}, vol. 14, no. 1, pp. 63--69, 1997.

\bibitem{conceicao2022heuristics}
F.~Conceição, C.~H. Antunes, M.~Gomes, V.~Silva, and R.~Dinis,
\newblock ``Max-min fairness optimization in uplink cell-free massive {MIMO}
  using meta-heuristics,''
\newblock {\em IEEE Transactions on Communications}, pp. 1--1, 2022.

\bibitem{schubert2019}
M.~Schubert, R.~Bohnke, and W.~Xu,
\newblock ``Multi-connectivity beamforming for enhanced reliability and massive
  access,''
\newblock in {\em European Conf. on Networks and Communications}, 2019,
\newblock available on https://www.researchgate.net/.

\bibitem{renatomaxmin}
R.~L.~G. Cavalcante and S.~Sta\'nczak,
\newblock ``Fundamental properties of solutions to utility maximization
  problems,''
\newblock in {\em arXiv:1610.01988}, 2016.

\bibitem{renato2016maxmin}
R.~L.~G. Cavalcante, M.~Kasparick, and S.~Sta\'nczak,
\newblock ``Max-min utility optimization in load coupled interference
  networks,''
\newblock {\em {IEEE} Trans. Wireless Commun.}, vol. 16, no. 2, pp. 705--716,
  Feb. 2017.

\bibitem{ismayilov2019power}
R.~Ismayilov, B.~Holfeld, R.~L.~G. Cavalcante, and M.~Kaneko,
\newblock ``Power and beam optimization for uplink millimeter-wave hotspot
  communication systems,''
\newblock in {\em 2019 IEEE Wireless Communications and Networking Conference
  (WCNC)}. IEEE, 2019, pp. 1--8.

\bibitem{nuzman07}
C.~J. Nuzman,
\newblock ``Contraction approach to power control, with non-monotonic
  applications,''
\newblock in {\em IEEE GLOBECOM 2007-IEEE Global Telecommunications
  Conference}. IEEE, 2007, pp. 5283--5287.

\bibitem{renato2019}
R.~L.~G. Cavalcante, Q.~Liao, and S.~Sta{\'n}czak,
\newblock ``Connections between spectral properties of asymptotic mappings and
  solutions to wireless network problems,''
\newblock {\em IEEE Transactions on Signal Processing}, vol. 67, no. 10, pp.
  2747--2760, 2019.

\bibitem{cai2012maxmin}
D.~W.~H. Cai, C.~W. Tan, and S.~H. Low,
\newblock ``Optimal max-min fairness rate control in wireless networks:
  Perron-frobenius characterization and algorithms,''
\newblock in {\em 2012 Proceedings IEEE INFOCOM}, 2012, pp. 648--656.

\bibitem{yates95}
R.~D. Yates,
\newblock ``A framework for uplink power control in cellular radio systems,''
\newblock {\em {IEEE} J. Select. Areas Commun.}, vol. 13, no. 7, pp. pp.
  1341--1348, Sept. 1995.

\bibitem{renato2016}
R.~L.~G. Cavalcante, Y.~Shen, and S.~Sta{\'n}czak,
\newblock ``Elementary properties of positive concave mappings with
  applications to network planning and optimization,''
\newblock {\em {IEEE} Trans. Signal Processing}, vol. 64, no. 7, pp.
  1774--1873, April 2016.

\bibitem{krause2015}
U.~Krause,
\newblock {\em Positive dynamical systems in discrete time: theory, models, and
  applications}, vol.~62,
\newblock Walter de Gruyter GmbH \& Co KG, 2015.

\bibitem{nussbaum1986convexity}
R.~D. Nussbaum,
\newblock ``Convexity and log convexity for the spectral radius,''
\newblock {\em Linear Algebra and its Applications}, vol. 73, pp. 59--122,
  1986.

\bibitem{miretti2021team}
L.~Miretti, E.~Björnson, and D.~Gesbert,
\newblock ``Team {MMSE} precoding with applications to cell-free massive
  {MIMO},''
\newblock {\em {IEEE} Trans. Wireless Commun.}, 2022,
\newblock Early {A}ccess.

\bibitem{miretti2021team2}
L.~Miretti, E.~Björnson, and D.~Gesbert,
\newblock ``Team precoding towards scalable cell-free massive {MIMO}
  networks,''
\newblock {\em 2021 55th Asilomar Conference on Signals, Systems, and
  Computers}, 2021.

\bibitem{bjornson2019making}
E.~Bj{\"o}rnson and L.~Sanguinetti,
\newblock ``Making cell-free massive {MIMO} competitive with {MMSE} processing
  and centralized implementation,''
\newblock {\em IEEE Transactions on Wireless Communications}, vol. 19, no. 1,
  pp. 77--90, 2019.

\end{thebibliography}

\end{document}